\documentclass[11pt,a4paper]{article} 

\usepackage{amsmath,amssymb}
\usepackage{comment}
\usepackage{graphicx}
\usepackage[lined]{algorithm2e} 

\newlength{\wider}
\setlength{\wider}{1.6cm}
\addtolength{\textwidth}{2\wider}
\addtolength{\oddsidemargin}{-\wider}
\addtolength{\evensidemargin}{-\wider}
\addtolength{\textheight}{2.5cm}
\addtolength{\topmargin}{-1.5cm}



\def\squareforqed{\hbox{\rlap{$\sqcap$}$\sqcup$}}
\def\qed{\ifmmode\squareforqed\else{\unskip\nobreak\hfil
\penalty50\hskip1em\null\nobreak\hfil\squareforqed
\parfillskip=0pt\finalhyphendemerits=0\endgraf}\fi}
\newtheorem{theorem}{Theorem}
\newtheorem{lemma}[theorem]{Lemma}

\newtheorem{definition}[theorem]{Definition}

\newenvironment{proof}{\begin{trivlist}\item[]{\flushleft\bf Proof }}
{\qed\end{trivlist}}

\title{Finding Optimal Flows Efficiently}
\author{Mehdi Mhalla\footnote{LIG, University of Grenoble, France, mehdi.mhalla@imag.fr} ,
Simon Perdrix\footnote{Oxford University Computing Laboratory, simon.perdrix@comlab.ox.ac.uk}}
\date{}
\begin{document}
\maketitle
\begin{abstract} \footnotesize 
Among the models of quantum computation, the One-way Quantum Computer
\cite{BR,RB} is one of the most promising proposals of physical
realization \cite{WRRS}, and opens new perspectives for
parallelization by taking advantage of quantum entanglement
\cite{Annelham}.    
Since a one-way quantum computation is based on quantum measurement, which is a fundamentally nondeterministic evolution, a sufficient condition of global determinism has been introduced in \cite{DK}  as the existence of a \emph{causal flow} in a graph that underlies the computation. 
 A $O(n^3)$-algorithm has been introduced \cite{B07} for finding such a causal flow when the numbers of output and input vertices in the graph are equal, otherwise no polynomial time algorithm\footnote{an exponential time algorithm is proposed in \cite{B07}} was known for deciding whether a graph has a causal flow or not.
Our main contribution is to introduce  a $O(n^2)$-algorithm for finding a causal flow, if any,  whatever the numbers of input and output vertices are. This answers the open question stated by Danos and Kashefi \cite{DK} and by de Beaudrap \cite{B07}. Moreover, we prove that our algorithm produces an optimal flow (flow of minimal depth.)

Whereas the existence of a causal flow is a sufficient condition for determinism, it is not a necessary condition. A weaker version of the causal flow, called \emph{gflow} (generalized flow) has been introduced in \cite{BKMP} and has been proved to be a necessary and sufficient condition for 
a family of deterministic computations. Moreover the depth of the
quantum computation is upper bounded by the depth of the gflow.  
However, the existence of a polynomial time algorithm that finds a
gflow has been stated as  an open question in \cite{BKMP}. In this
paper we answer this positively with a polynomial time algorithm that outputs an optimal gflow of a given graph and thus finds an optimal correction strategy to the nondeterministic evolution due to measurements.

\end{abstract}
{\bf Keywords: Graph Algorithms, Quantum Computing}

\section{Introduction}

A one-way quantum computation \cite{BR} consists in performing a
sequence of one-qubit measurements on an initial entangled quantum
state described by a graph and called graph state \cite{Graphstate}
where some vertices correspond to the input qubits of the computation,
others to the output qubits and the rest of the vertices correspond to auxiliary qubits measured during the computation. 
Since quantum measurements are nondeterministic, a one-way quantum computation  requires corrections which depend on the results of the measurements, and  which should induce a minimal depth for the computation.

 Because of these corrections, not all graph states can be used for  deterministic computation. 
The measurement calculus  \cite{DKP} is a formal framework for one-way quantum
computations, where the dependencies between
measurements and corrections are precisely identified.
Using this formalism, Danos and Kashefi  in \cite{DK} proved that a one-way quantum computation obtained by translation from a quantum circuit is such that the underlying graph satisfies a causal flow condition (see section \ref{sec:def}.)

In \cite{B07b} a polynomial time algorithm in the size of the graph has been proposed for finding a causal flow when the numbers of outputs and inputs are equal, whereas the existence of a polynomial time algorithm in the general case, has been stated as an open question. 
We propose in this paper a faster and more general algorithm for finding a causal flow, whenever the numbers of inputs and outputs are different. 

It turns out that the existence of a causal flow is not a necessary condition for determinism. A weaker flow condition, the gflow condition has indeed been introduced for characterizing uniform, stepwise and strong deterministic computation,  where the correction strategy does not depend on the measurement basis (see \cite {BKMP} for a formal definition.) 
Here, we  introduce a polynomial time algorithm for finding a gflow and thus checking whether a graph allows a uniform deterministic computation, which gives substantially more relevance to the notion of gflow introduced in \cite{BKMP}.

We also prove that the algorithms proposed are optimal, which means that they give a minimal depth flow. This implies   that the gflow algorithm gives a lower bound on the complexity of a correction strategy in a measurement-based setting for quantum computation.

\section{Definitions}\label{sec:def}
A graph with input and output vertices is called an open graph, and is defined as follows: 

 \begin{definition}[Open Graph]
An open graph is a triplet $(G,I,O)$, where $G=(V,E)$ is a undirected graph, and $I, O\subseteq V$ are respectively called input and output vertices.
\end{definition}

During a one-way quantum computation all non output qubits
(represented as non output vertices in the corresponding open graph)
are measured. Since quantum measurements are nondeterministic, for
each qubit measurement,  a corrective strategy consists in acting on
some unmeasured non input qubits, depending on the classical outcome
of the measurement, in order to make the computation
deterministic. Thus, a corrective strategy induces a sequential
dependance between  measurements. As a consequence, the depth of the
quantum  computation depends on the corrective strategy.

Corrective strategies will be defined by flows on the open graphs. A
flow $(g,\prec)$ consists in a partial order $\prec$ over the vertices ($i\prec j$ if $i$ is measured before $j$)  and a function $g$ that associates with each vertex, the vertices used for correcting its measurement (all non output qubits are measured.) Input qubits  cannot be used for correction (see \cite{DKP}.)
   
Given an open graph, two kinds of flows  are considered: the causal flow and the gflow (generalized flow.) The former
has been introduced by Danos and Kashefi \cite{DK} and corresponds to
the computation strategy that consists in correcting each qubit
measurement by acting on a single neighbor of the measured qubit. 
For a given open graph, a causal flow is characterized by a function $g$ which associates with each non output vertex a  non input vertex used for the correction of its measurement. More formally:

\begin{definition}[causal flow]
$(g,\prec)$ is a causal flow of $(G,I,O)$, where $g:V(G)\setminus O \to V(G)\setminus I$ and $\prec$ is a strict partial order over $V(G)$, if and only if\\
\indent 1. $i\prec g(i)$\\
\indent 2. if  $j\in N(g(i))$ then $j=i$ or $i\prec j$, where $N(v)$ is the
neighborhood of $v$\\
\indent 3. $i\in N(g(i)).$
\end{definition}

\begin{figure}[h]\label{fig:causalflow}
   \begin{center}
\includegraphics[width=4cm]{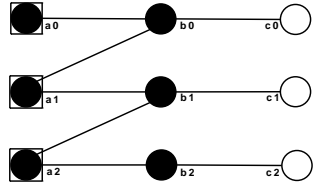}

\caption{\footnotesize{Example of open graph -- squared vertices represent inputs,
white vertices represent outputs -- which has a causal flow $(g,\prec)$,
where $g(a_i)=b_i$, $g(b_i)=c_i$ and $a_0 \prec a_1 \prec a_2 \prec
\{b_0,b_1,b_2\} \prec \{c_0,c_1,c_2\}$.}}

\end{center}
\end{figure}    

An example of causal flow is given in Figure 1. 
Notice that if the numbers of  input  and  output
vertices are the same, a causal flow can be reduced to a path cover
and then to a standard network flow  \cite{B07}. This  reduction  has
been used to define an $O(n^3)$-algorithm for finding a causal flow in the
case where the cardinalities of input and output qubits are the same \cite{B07}.

The second type of flow considered, the generalized flow, {\em
gflow}, has been introduced in \cite{BKMP} and corresponds to a more
general correction strategy that associates with each non output
vertex a set of vertices used for the corresponding correction
(instead of a single vertex.) This generalization 
not only leads to a reduction of the computational depth, but also
provides a corrective strategy to some open graphs having no causal
flow. Moreover, notice that gflow characterizes uniform, strong and stepwise deterministic computations \cite{BKMP}.

For a given open graph, a gflow $(g,\prec)$ is characterized by a function $g$ which associates with each non output vertex, a set of non input vertices used for its correction, and a strict partial order $\prec$:

\begin{definition}[gflow]\label{def:gflow}
$(g,\prec)$ is a gflow of $(G,I,O)$, where $g:V(G)\setminus O \to
\wp(V(G)\setminus I)\setminus \{\emptyset \}$ and $\prec$ is a strict
partial order over $V(G)$, if and only if\\ 
\indent 1. if $j\in g(i)$ then $i\prec j$\\
\indent 2. if  $j\in Odd(g(i))$ then $j=i$ or $i\prec j$\\
\indent 3. $i \in Odd(g(i))$\\
Where $Odd (K)=  \{u\, ,\,  |N(u)\cap K|=1 \mod 2\}$ is the odd neighborhood of $K$, i.e. the set of vertices having an odd number of neighbors in $K$.
\end{definition}

A graphical interpretation of the generalised flow is given in Figure
\ref{fig:gflow}.
 
\begin{figure}[h]\label{fig:gflow}
   \begin{center}
\includegraphics[width=6cm]{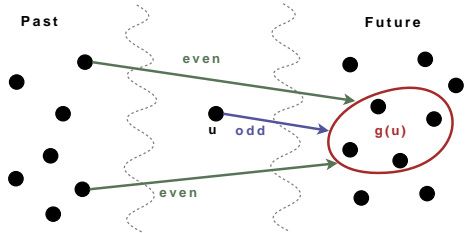}
\caption{\footnotesize{Graphical interpretation of a gflow $(g,\prec)$: for a given
vertex $u$ all the vertices larger than $u$ are in the future of $u$
since the corresponding qubits will be measured after the qubit
$u$, all others are in the past of $u$. The set $g(u)$ has to be in
the future of $u$ and such that the following parity conditions are
satisfyied: there is an odd number of edges between $g(u)$ and $u$ and
there is a even number of edges between $g(u)$ and any vertex in the past of $u$.}}
 
\end{center}
\end{figure}    

A flow $(g,\prec)$ of $(G,I,O)$ induces a partition of the vertices of
the open graph: 
\begin{definition}
For a given open graph $(G,I,O)$ and a given flow $(g,\prec)$ of $(G,I,O)$,
 let $$V^\prec_{k} = \begin{cases}\mathsf{max}_\prec(V(G))$$& \text{if
$k=0$}\\ \mathsf{max}_\prec(V(G)\setminus
V_{k-1}^\prec)& \text{if $k>0$}\end{cases}$$ where $\mathsf{max}_\prec(X) =
\{u\in X \  s.t.\  \forall v \in X, \neg (u\prec v)\}$ is the set of the
maximal elements of $X$. The \emph{depth} $d^\prec$ of the flow is the smallest
$d$ such that $V^\prec_{d+1}=\emptyset$. $(V^\prec_k)_{k=0\ldots d^\prec}$
is a partition of $V(G)$ into $d^\prec+1$ layers.
\end{definition}

A causal flow or a gflow $(g,\prec)$ of $(G,I,O)$ leads to a corrective
strategy for the corresponding one-way quantum computation, which consists in
measuring the non output qubits of each layer in parallel, from the
layer $V_{d^\prec}^\prec$ to the layer $V^\prec_1$. 
After the measurement of a layer $V^\prec_k$, with
$k>0$, corrections are realised according to the function $g$ by acting
on qubits in $\cup_{i<k} V^\prec_i$ (see \cite{BKMP} for details.)  The depth of such a one-way quantum computation is $d^\prec$.

\begin{definition}
For a given open graph $(G,I,O)$ and two given flows $(g,\prec)$ and
$(g',\prec')$ of $(G,I,O)$, $(g,\prec)$ is more delayed than
$(g',\prec')$ if 
 $\forall k$, $|\cup_{i=0\ldots k} V^\prec_k | \ge   |\cup_{i=0\ldots k} V^{\prec'}_k|$ 
and there exists a $k$ such that the inequality  is strict. \\
A causal flow (resp. gflow) $(g,\prec)$ is {\em maximally delayed}
if there exists no causal flow (resp. gflow) of the same open graph that is   more delayed. 
\end{definition}
For instance, the flow $(g,\prec)$ described in Figure
1  is a maximally delayed causal flow. However,
$(g,\prec)$ is not a maximally delayed gflow since
$(g',\prec')$ is a more delayed gflow, where
$g'(a_0)=\{b_0,b_1,b_2\}, g'(a_1)=\{b_1,b_2\}, g'(a_2)=\{b_2\},
g'(b_0)=\{c_0\},g'(b_1)=\{c_1\},g'(b_2)=\{b_2\}$, and
$\{a_0,a_1,a_2\}\prec'\{b_0,b_1,b_2\}\prec'\{c_0,c_1,c_2\}$. One can
prove that $(g',\prec')$ is a maximally delayed gflow.

\  

The following two lemmas are proved for both kinds of flows.
 
\begin{lemma}\label{lem:flowv0}
If  $(g,\prec)$ is a maximally delayed causal flow (gflow) of
$(G,I,O)$ then $V^\prec_0=O$. 
\end{lemma}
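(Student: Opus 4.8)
The plan is to prove the two inclusions $V^\prec_0\subseteq O$ and $O\subseteq V^\prec_0$ separately. The first is an immediate consequence of the flow axioms and uses nothing about delaying; the second is where maximal delayedness enters, through a small local modification of the order that ``promotes'' a misplaced output.

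For $V^\prec_0\subseteq O$ I would argue directly. Every $i\in V(G)\setminus O$ has $g(i)$ defined, and either $i\prec g(i)$ (causal flow, condition~1) or, in the gflow case, $g(i)\neq\emptyset$ so there is a $j\in g(i)$ with $i\prec j$ (Definition~\ref{def:gflow}, condition~1); either way $i$ is not $\prec$-maximal, hence $i\notin V^\prec_0$. For the reverse inclusion I would assume, toward a contradiction, that some output $o\in O$ is not $\prec$-maximal, and define $\prec'$ from $\prec$ by deleting every pair whose smaller element is $o$, i.e.\ $x\prec' y\iff x\prec y\wedge x\neq o$. One checks quickly that $\prec'$ is again a strict partial order (irreflexivity is inherited; transitivity survives because the deleted pairs all start at $o$, which now has nothing above it), that $\prec'\subseteq\prec$, and that $o\in V^{\prec'}_0$. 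Next I would verify that $(g,\prec')$ — with the \emph{same} $g$ — is still a flow of $(G,I,O)$: in both definitions every \emph{forced} order relation ($i\prec g(i)$, or $i\prec j$ for $j$ in a neighbourhood resp.\ odd neighbourhood) has as its smaller endpoint a vertex $i$ lying in the domain of $g$, hence $i\in V(G)\setminus O$ and in particular $i\neq o$, so none of these relations is destroyed by the deletion. Finally, since $\prec'\subseteq\prec$, shrinking the order can only move a vertex into the same or a later layer — concretely, the length of the longest ascending chain out of any vertex does not increase — so $\bigcup_{i\le k}V^\prec_i\subseteq\bigcup_{i\le k}V^{\prec'}_i$ for every $k$; combined with $o\in V^{\prec'}_0\setminus V^\prec_0$ this makes $(g,\prec')$ strictly more delayed than $(g,\prec)$, the desired contradiction. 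Hence no such $o$ exists and $O\subseteq V^\prec_0$.

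The one step that genuinely needs care is checking that passing from $\prec$ to $\prec'$ preserves the flow axioms, and this rests on the small but essential observation that an output vertex is never the smaller endpoint of a forced order relation, in either the causal-flow or the gflow definition. Granting that, everything else — $\prec'$ being a strict partial order, the monotonicity of the layer decomposition under shrinking $\prec$, and its translation into the ``more delayed'' comparison — is routine bookkeeping, and I would present the argument uniformly for both kinds of flow, noting only the minor difference in how condition~1 is invoked.
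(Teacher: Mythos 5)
Your proof is correct and follows essentially the same route as the paper's: the first inclusion from condition~1 of the flow definitions, and the second by deleting the order relations emanating from a misplaced output and checking that the flow axioms survive because no forced relation has an output as its smaller endpoint. The only (immaterial) difference is that you delete the relations below a single misplaced output $o$, whereas the paper removes $(O\setminus V^\prec_0)\times V(G)$ all at once.
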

\begin{proof}
Let $(g,\prec)$ be a maximally delayed causal flow (gflow) of $(G,I,O)$. Elements of $V^\prec_0$ have no image under $g$ because of condition $1$ in both definitions thus $V^\prec_0 \subseteq
O$. Moreover, by contradiction, if $O \setminus V^\prec_0\neq
\emptyset$, let $\prec' =\prec \setminus (O\setminus V^\prec_0)\times
V(G)$.    
$(g,\prec')$ is a causal flow (gflow) of
$(G,I,O)$: condition $1$ of both definitions is satisfied by $\prec'$,
because the domain of $g$ does not intersect $O$, so for any $i$ in the
domain of $g$, $i\prec' j$ iff $i\prec j$; conditions 2 and 3 of both
definitions are satisfied in a same way. 
Thus,  $(g,\prec')$ is a causal flow (gflow) of $(G,I,O)$. Moreover,
for any $k$, $\cup_{i=0\ldots k} V_k^\prec \subseteq \cup_{i=0\ldots
k}V_k^{\prec'}$, and $\vert V_0^\prec\vert < \vert V_0^{\prec'}\vert$
thus $(g,\prec')$ is more delayed than $(g,\prec)$ which leads to a contradiction.
\end{proof}

\begin{lemma}\label{lem:flowv1}
If $(g,\prec)$ is a maximally delayed  causal flow (gflow) of
$(G,I,O)$ then $(\tilde{g},\tilde{\prec})$ is a maximally delayed
causal flow (gflow) of $(G,I,O \cup V^\prec_1)$ where $\tilde{g}$ is the restriction of $g$ to $V(G) \setminus (V^\prec_0 \cup V^\prec_1)$ and $\tilde{\prec}=\prec \setminus V^\prec_1\times V^\prec_0$.
\end{lemma}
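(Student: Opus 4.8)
The plan is to proceed in three stages: first check that $(\tilde{g},\tilde{\prec})$ is a flow of $(G,I,O\cup V^\prec_1)$, then identify its layers, and finally prove maximal delay by contradiction. The causal-flow and gflow cases run in parallel throughout; the only case distinction is whether $g(i)$ is read as a vertex or as a set. \emph{It is a flow.} By Lemma~\ref{lem:flowv0} we have $V^\prec_0=O$, so the domain $V(G)\setminus(V^\prec_0\cup V^\prec_1)$ of $\tilde{g}$ is disjoint from $V^\prec_1$. First I would observe that $\tilde{\prec}$ is a strict partial order: irreflexivity and asymmetry are inherited, and transitivity holds because no vertex $u\in V^\prec_1$ has a $\tilde{\prec}$-successor (if $u\prec v$ with $u\in V^\prec_1$ then $v\in V^\prec_0$ by maximality of $u$ in $V(G)\setminus V^\prec_0$, and that pair is deleted), so any chain $x\tilde{\prec}y\tilde{\prec}z$ has $x\notin V^\prec_1$, whence $(x,z)\notin V^\prec_1\times V^\prec_0$ and $x\tilde{\prec}z$ follows from $x\prec z$. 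Then the three flow conditions for $(\tilde{g},\tilde{\prec})$ reduce to those for $(g,\prec)$: every ``$i\prec j$'' forced by condition~1 or~2 has $i$ in the domain of $\tilde{g}$, hence $i\notin V^\prec_1$, hence $(i,j)\notin V^\prec_1\times V^\prec_0$, so $\prec$ and $\tilde{\prec}$ agree there; condition~3 and the codomain constraint involve only $G$ and $g$ and are unchanged. \emph{Its layers.} Deleting the arrows in $V^\prec_1\times V^\prec_0$ makes every vertex of $V^\prec_1$ maximal for $\tilde{\prec}$ (its only $\prec$-successors were in $V^\prec_0$), keeps $V^\prec_0$ maximal, and leaves the vertices of $\prec$-layer $\ge 2$ non-maximal; hence $V^{\tilde{\prec}}_0=O\cup V^\prec_1$. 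On $V(G)\setminus(V^\prec_0\cup V^\prec_1)$ the two orders coincide, so $V^{\tilde{\prec}}_k=V^\prec_{k+1}$ for all $k\ge 1$; equivalently $\bigcup_{i\le k}V^\prec_i=\bigcup_{i\le k-1}V^{\tilde{\prec}}_i$ for $k\ge 1$ and $\bigcup_{i\le 0}V^\prec_i=O$.

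\emph{Maximal delay.} Suppose for contradiction that some flow $(h,\prec_h)$ of $(G,I,O\cup V^\prec_1)$ is more delayed than $(\tilde{g},\tilde{\prec})$. As in the proof of Lemma~\ref{lem:flowv0}, the $\prec_h$-maximal vertices have no image under $h$ and so lie outside its domain, giving $V^{\prec_h}_0\subseteq O\cup V^\prec_1$; with $|V^{\prec_h}_0|\ge|V^{\tilde{\prec}}_0|=|O\cup V^\prec_1|$ this forces $V^{\prec_h}_0=O\cup V^\prec_1$. Now glue $h$ with the restriction of $g$ to $V^\prec_1$: put $g'(i)=g(i)$ for $i\in V^\prec_1$ and $g'(i)=h(i)$ for $i\in V(G)\setminus(O\cup V^\prec_1)$, and define $\prec'$ by $u\prec' v\iff\ell(u)>\ell(v)$, where $\ell=0$ on $O$, $\ell=1$ on $V^\prec_1$, and $\ell=k+1$ on $V^{\prec_h}_k$ for $k\ge 1$. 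The crucial facts are that for $i\in V^\prec_1$ condition~1 of the original flow forces $g(i)$ to lie in $V^\prec_0=O$ (where $\ell=0<1=\ell(i)$), while for $i\notin O\cup V^\prec_1$ the relevant conditions are those of $h$ and $\ell$ is strictly smaller at $j$ than at $i$ whenever $i\prec_h j$; granting these, the three flow conditions of $(G,I,O)$ for $(g',\prec')$ check out routinely. The layers of $\prec'$ are $V^{\prec'}_0=O$, $V^{\prec'}_1=V^\prec_1$, and $V^{\prec'}_k=V^{\prec_h}_{k-1}$ for $k\ge 2$, so $\bigcup_{i\le k}V^{\prec'}_i=\bigcup_{i\le k-1}V^{\prec_h}_i$ for $k\ge 1$. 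Feeding this together with the layer identities established above into the hypothesis that $(h,\prec_h)$ is more delayed than $(\tilde{g},\tilde{\prec})$ gives $|\bigcup_{i\le k}V^{\prec'}_i|\ge|\bigcup_{i\le k}V^\prec_i|$ for all $k$, strict at $k=m+1$ for any $m$ witnessing the strict delay of $(h,\prec_h)$ over $(\tilde{g},\tilde{\prec})$ --- and such $m$ satisfies $m\ge 1$ because $V^{\prec_h}_0=V^{\tilde{\prec}}_0$. This contradicts the maximal delay of $(g,\prec)$, so $(\tilde{g},\tilde{\prec})$ is maximally delayed.

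I expect the last stage to be the real obstacle: $\prec'$ must be chosen so that \emph{simultaneously} all three flow conditions hold and the induced partition is exactly $(O,V^\prec_1,V^{\prec_h}_1,V^{\prec_h}_2,\dots)$, and then one has to keep the index shift straight when converting ``$(h,\prec_h)$ more delayed than $(\tilde{g},\tilde{\prec})$'' into ``$(g',\prec')$ more delayed than $(g,\prec)$''. The modest but essential remark that any flow more delayed than $(\tilde{g},\tilde{\prec})$ already has first layer $O\cup V^\prec_1$ is precisely what makes the gluing of $h$ with the restriction of $g$ to $V^\prec_1$ well defined.
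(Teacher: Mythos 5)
Your proposal is correct and follows essentially the same route as the paper: verify that $(\tilde{g},\tilde{\prec})$ is a flow of $(G,I,O\cup V^\prec_1)$, then suppose a more delayed flow exists and extend it back to a flow of $(G,I,O)$ by gluing it with the restriction of $g$ to $V^\prec_1$, contradicting the maximal delay of $(g,\prec)$. You supply the layer bookkeeping and the well-definedness details that the paper leaves implicit (and your labeling-based definition of $\prec'$ is a cleaner packaging of the paper's $\prec''=\prec'\cup\{(u,v):u\in V^\prec_1\wedge u\prec v\}$), but the underlying argument is the same.
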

\begin{proof}
First, one can prove that $(\tilde{g},\tilde{\prec})$ is a causal flow
(gflow) of $(G,I,O \cup V^\prec_1)$. Moreover, 
by contradiction, if there exists a causal flow (gflow) $(g',\prec')$
that is more delayed than $(\tilde{g},\tilde{\prec})$ then it could be
extended to $(g'',\prec'')$ where $g''(u)=g'(u)$ if $u \in V \setminus
(V^\prec_0 \cup V^\prec_1)$, $g''(u)=g(u)$ if $u \in V^\prec_1$ and
$\prec'' = \prec' \cup \{(u,v), u\in V^\prec_1 \wedge u\prec v\} $. 
$(g'',\prec'')$ is then a more delayed causal flow (gflow) of $(G,I,O)$ than
$(g,\prec)$, which leads to a contradiction.
\end{proof}

\begin{lemma}\label{lem:gflowv1}
If $(g,\prec)$ is a maximally delayed  gflow, then  $V^\prec_1=\{ u\in V \setminus O$, $\exists K \subseteq O$,  $Odd(K) \cap (V\setminus O)=\{ u\}\}$.
\end{lemma}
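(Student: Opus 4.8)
The plan is to prove the two inclusions separately, relying on Lemmas~\ref{lem:flowv0} and~\ref{lem:flowv1}. First I would handle the inclusion $V^\prec_1 \subseteq \{u\in V\setminus O : \exists K\subseteq O,\ Odd(K)\cap(V\setminus O)=\{u\}\}$. Take $u\in V^\prec_1$. By Lemma~\ref{lem:flowv0}, $V^\prec_0=O$, so every vertex of $g(u)$ lies in $V^\prec_0=O$ (condition~1 of the gflow definition forces $g(u)$ into the strict future of $u$, and the only layer strictly above $u$ is $V^\prec_0$). Set $K=g(u)\subseteq O$. Condition~3 gives $u\in Odd(K)$, and condition~2 says every other element of $Odd(K)$ is strictly above $u$, hence in $O=V^\prec_0$ and therefore not in $V\setminus O$. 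So $Odd(K)\cap(V\setminus O)=\{u\}$, as required.

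For the reverse inclusion, suppose $u\in V\setminus O$ and there is some $K\subseteq O$ with $Odd(K)\cap(V\setminus O)=\{u\}$; I want to show $u\in V^\prec_1$. The idea is to apply Lemma~\ref{lem:flowv1}: $(\tilde g,\tilde\prec)$ is a maximally delayed gflow of $(G,I,O\cup V^\prec_1)$. If $u$ were \emph{not} in $V^\prec_1$, then $u$ is a non-output vertex of the open graph $(G,I,O\cup V^\prec_1)$, and the set $K\subseteq O\subseteq O\cup V^\prec_1$ still satisfies conditions 1--3 of the gflow definition for $u$ in this larger open graph: condition~1 holds because $K\subseteq O\cup V^\prec_1 = V^{\tilde\prec}_0$ is the top layer (using Lemma~\ref{lem:flowv0} applied to the maximally delayed gflow $(\tilde g,\tilde\prec)$), condition~3 holds since $u\in Odd(K)$, and condition~2 holds because the only element of $Odd(K)$ inside $V\setminus O$ is $u$ itself, while elements of $Odd(K)$ in $O$ are now in the top layer and hence above $u$. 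Thus one could augment $\tilde g$ by setting $\tilde g(u)=K$, moving $u$ up into the first layer, to obtain a strictly more delayed gflow of $(G,I,O\cup V^\prec_1)$ — contradicting maximal delayedness. Hence $u\in V^\prec_1$.

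The main obstacle is the reverse inclusion, and specifically making the ``augment $\tilde g$ by setting $\tilde g(u)=K$'' step fully rigorous: one must check that redefining $g$ on the single vertex $u$ (which currently has no image, since $u$ would be in layer $V^{\tilde\prec}_0=O\cup V^\prec_1$ of $\tilde\prec$) and promoting $u$ to a new first layer genuinely yields a valid gflow whose layer partition is strictly more delayed. The delicate points are that adding $u$ to a new layer just below $V^{\tilde\prec}_0$ does not disturb conditions 2 and 3 for any \emph{other} vertex $w$ — this is immediate since $g$ and the odd-neighbourhood sets are unchanged for $w$, and the only new order relations are $u\prec v$ for $v$ above $u$, which can only help — and that the resulting flow is more delayed because its layer $0$ is $O\cup V^\prec_1$ minus $\{u\}$ is wrong; rather one keeps layer structure on $V\setminus\{u\}$ and inserts $u$, so that the first few cumulative layer-sizes strictly increase. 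I would write this out carefully, possibly phrasing it as: the new order is $\prec'' = \tilde\prec \cup \{(u,v) : v \in (O\cup V^\prec_1)\setminus\{u\}\}$ together with all relations $u\prec'' w$ inherited where needed, and then verify the ``more delayed'' inequality directly from the definition.

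One subtlety worth flagging explicitly in the write-up: we must ensure $u\notin O$ is genuinely available as a non-output vertex of $(G,I,O\cup V^\prec_1)$, i.e. that $u\notin V^\prec_1$ is exactly the assumption being contradicted, and that $K\neq\emptyset$ so that $\tilde g(u)=K$ is a legal value (an element of $\wp(V\setminus I)\setminus\{\emptyset\}$); indeed $K$ cannot be empty since $Odd(\emptyset)=\emptyset$ would not contain $u$, contradicting condition~3. With these checks in place, both inclusions follow and the lemma is established.
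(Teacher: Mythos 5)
Your first inclusion ($V^\prec_1\subseteq\{u:\exists K\subseteq O,\ Odd(K)\cap(V\setminus O)=\{u\}\}$) is correct and is essentially the paper's argument: take $K=g(u)$, use $V^\prec_0=O$ from Lemma~\ref{lem:flowv0}, and read off conditions 1--3. The reverse inclusion, however, has a genuine gap caused by routing the contradiction through Lemma~\ref{lem:flowv1}. You pass to the maximally delayed gflow $(\tilde g,\tilde\prec)$ of $(G,I,O\cup V^\prec_1)$ and claim that promoting $u$ into ``the first layer'' of that restricted flow yields a strictly more delayed gflow of the restricted graph. But the first layer of $\tilde\prec$ is $V^{\tilde\prec}_1=V^\prec_2$. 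So if the offending vertex $u$ happens to lie in $V^\prec_2$ (which is perfectly compatible with your hypothesis $u\notin V^\prec_1$), then $u$ is \emph{already} in the first layer of $(\tilde g,\tilde\prec)$; redefining $\tilde g(u):=K$ and placing $u$ in layer~1 reproduces the same layer partition, is not strictly more delayed, and yields no contradiction. The restricted flow simply cannot distinguish ``$u$ in layer 1 of $\prec$'' from ``$u$ in layer 2 of $\prec$'', which is exactly the distinction the lemma is about. The paper avoids this by working with the \emph{original} flow: it sets $g'(u_1):=K$ and replaces all relations with $u_1$ on the left by $u_1\prec' v$ for $v\in K\subseteq O$ only, so that $u_1$ moves from layer $k\ge 2$ to layer $1$ of $(G,I,O)$, strictly increasing $|V^\prec_0\cup V^\prec_1|$ and contradicting maximal delayedness of $(g,\prec)$ itself.

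Two secondary points. First, your parenthetical that $u$ ``currently has no image, since $u$ would be in layer $V^{\tilde\prec}_0$'' contradicts your own (correct) earlier observation that $u$ is a non-output vertex of $(G,I,O\cup V^\prec_1)$: under the contradiction hypothesis $u\notin O\cup V^\prec_1$, so $u$ lies in the domain of $\tilde g$ and already has an image, which you would be overwriting. Second, your proposed order $\prec''=\tilde\prec\cup\{(u,v):\ldots\}$ only \emph{adds} relations; it retains any existing relation $u\mathrel{\tilde\prec}w$ with $w$ a non-output vertex, and any such relation keeps $u$ out of the first layer. To promote $u$ you must \emph{delete} the relations with $u$ on the left (keeping only $u\prec' v$ for $v$ in $K$, or in the output set), as the paper does, and then re-verify conditions 1--3 only for $u$ (they are unaffected for every other vertex since its image under $g$ and its outgoing relations are unchanged). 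If you repair the argument by applying this surgery to $(g,\prec)$ directly rather than to $(\tilde g,\tilde\prec)$, the proof goes through.
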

\begin{proof}
First, notice that if $(g,\prec)$ is a maximally delayed  gflow, then
for any $u\in V^\prec_1$, $g(u)\subseteq O$ since $u\prec v$ if $v \in g(u)$
(condition 1 of definition \ref{def:gflow}.) Furthermore, by
definition of $V^\prec_1$, if $u\prec v$ then $v\in O$ thus conditions 2 and
3 of definition \ref{def:gflow} imply that $Odd(g(u)) \cap (V\setminus O)=\{ u\}$.

To prove that any $u\in V \setminus O$ such that $\exists K \subseteq
O$,  $Odd(K) \cap V\setminus O=\{ u\}$, $u\in V^\prec_1$, we proceed by
contradiction. We prove that delaying the measurement of  a vertex not
in $V^\prec_1$ satisfying the condition permits to create a more delayed
gflow. Indeed, let $(g,\prec)$ be a maximally delayed flow of
$(G,I,O)$ and let $u_1 \in V \setminus V^\prec_0$ be such that
$\exists K \subseteq O$,  $Odd(K) \cap V\setminus O=\{ u_1\}$. Let $g'
(u)= K$ if $u=u_1$ and $g'(u)= g(u)$ otherwise. Let $\prec'$ be the strict
partial order defined by  $u \prec'v$ if $u \neq u_1$ and
$u\prec v$ or if $u= u_1$ and $v \in K$. It leads to a contradiction
since $(g',\prec')$ is a more delayed gflow
of $(G,I,O)$  than $(g,\prec)$.
\end{proof}

In a similar way, one can prove that:
\begin{lemma}\label{lem:causalflowv1}
If $(g,\prec)$ is a maximally delayed causal flow, then $V^\prec_1=\{ u\in V
\setminus O$, $\exists v \in O$, $N(v) \cap V\setminus O=\{ u\}\}$.
\end{lemma}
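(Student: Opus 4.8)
The plan is to mirror the proof of Lemma~\ref{lem:gflowv1}, replacing the parity condition with the corresponding single-neighbor condition for causal flows. First I would establish the inclusion ``$\subseteq$'': given a maximally delayed causal flow $(g,\prec)$, Lemma~\ref{lem:flowv0} gives $V^\prec_0=O$, so for any $u\in V^\prec_1$ we have (by condition 1 of the causal flow definition) that $g(u)\notin V^\prec_0$, and since $V^\prec_1$ consists of maximal elements of $V(G)\setminus V^\prec_0$, there is no $v$ with $u\prec v$ other than elements of $O$; hence $g(u)\in O$. Condition 3 gives $u\in N(g(u))$, and condition 2 says every other neighbor $j$ of $g(u)$ satisfies $u\prec j$, forcing $j\in O$. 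Therefore $N(g(u))\cap(V\setminus O)=\{u\}$, witnessed by $v=g(u)\in O$. This shows $V^\prec_1$ is contained in the claimed set.

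For the reverse inclusion ``$\supseteq$'', I would argue by contradiction exactly as in Lemma~\ref{lem:gflowv1}: suppose $u_1\in V\setminus O$ satisfies $N(v)\cap(V\setminus O)=\{u_1\}$ for some $v\in O$, but $u_1\notin V^\prec_1$. Since $u_1\notin V^\prec_0=O$, $u_1$ is in the domain of $g$. I would define a new causal flow $(g',\prec')$ by setting $g'(u_1)=v$ and $g'(u)=g(u)$ otherwise, and taking $\prec'$ to be the strict partial order with $u\prec' w$ iff ($u\neq u_1$ and $u\prec w$) or ($u=u_1$ and $w=v$). One then checks the three causal-flow conditions for $(g',\prec')$: condition 1 holds since $u_1\prec' v$; condition 3 holds since $u_1\in N(v)$; condition 2 holds since the only neighbor of $v=g'(u_1)$ outside $O$ is $u_1$ itself, and every element of $O$ is $\prec'$-above $u_1$ because $O=V^\prec_0$ and the construction makes $u_1$ strictly below all of $O$. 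Finally, $(g',\prec')$ is more delayed than $(g,\prec)$: moving $u_1$ out of its former layer and into layer $1$ does not decrease any prefix $|\cup_{i\le k}V^{\prec'}_i|$ and strictly increases it at the level where $u_1$ used to sit, contradicting maximal delay. Hence $u_1\in V^\prec_1$.

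The main obstacle I anticipate is verifying carefully that $\prec'$ as defined is genuinely a strict partial order (in particular transitive and irreflexive) and that relocating $u_1$ really yields a ``more delayed'' flow in the precise sense of the definition --- i.e.\ that the layer partition induced by $\prec'$ dominates that induced by $\prec$ at every prefix, with strict domination somewhere. This is the same delicate bookkeeping that underlies Lemma~\ref{lem:gflowv1}, so I would simply invoke ``in a similar way'' and import that argument, checking only that the causal-flow conditions (which involve a single neighbor rather than an odd neighborhood) are respected by the modified pair --- a routine substitution of ``$v\in N(\cdot)$'' for ``$v\in Odd(\cdot)$'' throughout.
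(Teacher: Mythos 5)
Your proposal is correct and matches the paper's intended argument: the paper gives no explicit proof of this lemma, merely asserting that it follows ``in a similar way'' from Lemma~\ref{lem:gflowv1}, and your substitution of the single-neighbor condition $N(v)\cap(V\setminus O)=\{u\}$ for the odd-neighborhood condition is exactly that adaptation. The only slip is that your stated $\prec'$ places $u_1$ below $v$ alone, whereas your verification of condition~2 requires $u_1$ to lie below every element of $N(v)\subseteq O\cup\{u_1\}$ (as you in fact assert), so the definition should read ``$u=u_1$ and $w\in O$'' --- the same harmless imprecision already present in the paper's own proof of Lemma~\ref{lem:gflowv1}.
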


Lemmas \ref{lem:gflowv1} and \ref{lem:causalflowv1} show that in a
maximally delayed flow, all the elements that can be corrected at the
last step are in the maximal layer of $V\setminus O$ (i.e. in
$V^\prec_1$.) Combined with the recursive structure of maximally delayed
flow (lemma \ref{lem:flowv1}), this shows that the layers $V^\prec_k$ of a maximally
delayed flow can be iteratively constructed by finding   elements that
can be corrected starting from the output qubits. This gives rise to
the polynomial time algorithms of the next sections.

\section {Causal flow algorithm}
The problem of finding a causal flow of a given open graph is presented in \cite{DK}, and a
solution has been proposed in the case where the numbers of
inputs and outputs are the same. The complexity of the algorithm is in
$O(nm)$ where $n$ is the number of vertices and
$m$ the number of
edges (more precisely $O(km)$ where $k$ is the number of inputs
(outputs) \cite{B07b}.) We present here a more general and faster algorithm.

\begin{theorem}\label{thm:algoflow}
For a given open graph $(G,I,O)$, finding a causal flow can be done in $O(k.n +m)$ operations where $n=|V(G)|$ is the number of vertices of $G$, $m=|V(E)|$ is the number of edges and $k=|O|$ is the size of the output. 
\end{theorem}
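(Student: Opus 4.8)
The plan is to build the layers of a maximally delayed causal flow iteratively, from the outputs inward, exactly as suggested by Lemmas~\ref{lem:flowv0}, \ref{lem:flowv1} and~\ref{lem:causalflowv1}. Starting from the set $O$ of outputs (which must be $V^\prec_0$ in any maximally delayed flow), I would repeatedly identify the next layer $V^\prec_1$ of the current open graph: by Lemma~\ref{lem:causalflowv1} this is exactly the set of non-output vertices $u$ for which there exists an already-processed vertex $v$ (i.e.\ $v$ in the current output set) with $N(v)\cap(V\setminus O)=\{u\}$; for each such $u$ we set $g(u)=v$. Having found this layer, Lemma~\ref{lem:flowv1} tells us we may recurse on $(G,I,O\cup V^\prec_1)$, so we add the new layer to the output set and repeat. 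When no new vertex can be added, either all non-output vertices have been assigned (and we output the flow together with the layering, which is maximally delayed by the lemmas) or some non-output vertex remains unassigned, in which case no causal flow exists.

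The correctness of this scheme follows directly from the three lemmas quoted above: Lemma~\ref{lem:causalflowv1} guarantees that every vertex correctable at the current last step is caught in the current iteration, Lemma~\ref{lem:flowv1} guarantees that the recursion on the enlarged output set is sound, and Lemma~\ref{lem:flowv0} provides the base case. So the real content of the theorem is the complexity bound $O(k\cdot n+m)$, where $k=|O|$. The key observation for efficiency is that a correction $g(u)=v$ is only ever performed for $v$ in the original output set $O$ or for $v$ a vertex that has itself been assigned a correction; more precisely, when processing layer by layer, a vertex $v$ that enters the ``output'' set as part of layer $V^\prec_\ell$ was put there because it is the unique non-processed neighbour of some already-processed vertex. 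The trick is that each vertex can serve as the ``witness'' $v$ (the vertex whose neighbourhood pins down $u$) at most a bounded number of times: I would maintain, for each not-yet-processed vertex $v$ in the current output set, a count of how many of its neighbours are still unprocessed, decrementing these counts as vertices get processed, and placing $v$ into a work-queue precisely when its count drops to $1$. Each edge is then touched $O(1)$ times for count maintenance, giving the $O(m)$ term; the $O(k\cdot n)$ term comes from the fact that only vertices reachable (in the appropriate sense) from the $k$ output vertices ever get processed, and the bookkeeping to initialise and scan over these interactions with the output set costs $O(n)$ per output vertex.

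The main obstacle I anticipate is the amortised-complexity argument: naively, re-scanning for ``vertices with a unique unprocessed neighbour'' each iteration costs $\Omega(m)$ per layer and hence $\Omega(nm)$ overall (which is the old bound). Getting down to $O(kn+m)$ requires care in the data structures --- in particular, arguing that the total work of maintaining the per-vertex unprocessed-neighbour counts, of detecting when a count hits $1$, and of resolving ties when several witnesses pin the same $u$ or one witness could pin several vertices, is genuinely linear in $m$ plus $kn$. I would handle this by a careful accounting: charge count-decrements to edges, charge each successful ``pinning'' event to the vertex being pinned (which is pinned only once), and charge the initial set-up of the search from each output vertex to that output vertex, contributing the factor $k$. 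A secondary subtlety is to verify that the incremental construction does produce a \emph{valid} strict partial order $\prec$ (transitivity and acyclicity), which follows because each new layer is placed strictly after all previously processed vertices; this needs to be stated but is routine. Writing the algorithm explicitly (in the \texttt{algorithm2e} environment already loaded) and then proving the loop invariant ``after iteration $\ell$, the processed set equals $\bigcup_{i\le\ell}V^{\prec}_i$ for a maximally delayed causal flow of $(G,I,O)$'' is the cleanest route to both correctness and the complexity bound.
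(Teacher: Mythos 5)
Your correctness skeleton is exactly the paper's: backward layer-by-layer peeling from $O$, justified by Lemmas~\ref{lem:flowv0}, \ref{lem:flowv1} and~\ref{lem:causalflowv1}, with termination when no new vertex can be pinned. Where you genuinely diverge is the complexity accounting, and there your argument is partly better and partly broken. The paper gets the $O(kn)$ term by maintaining the set $C$ of \emph{live} correctors and observing that $|C|\le k$ throughout (each corrector is consumed by the unique vertex it pins, and each round adds at most one new candidate per consumed one), then paying $O(|C|)=O(k)$ per processed vertex to update the sets $N(v)\cap(V\setminus Out)$. Your edge-charging scheme (decrement a per-vertex count of unprocessed neighbours once per edge endpoint, enqueue on reaching $1$) actually yields $O(n+m)$, which subsumes the claimed bound, so no $kn$ term needs justifying; your attempted explanation of it (``vertices reachable from the $k$ outputs'', ``$O(n)$ per output vertex'') is not a real argument and should simply be deleted. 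One subtlety you must make explicit for correctness, though: each layer has to be computed \emph{atomically} from the counts as they stood at the end of the previous round. If the queue is drained eagerly, a vertex pinned in round $k$ can drop another witness's count to $1$ and trigger a second pinning inside the same round; the two pinned vertices then get equal labels and condition~2 of the causal-flow definition fails for the second one (its corrector has a neighbour in the same layer, not a strictly later-measured one). Algorithm~\ref{algo:flow} avoids this by accumulating the whole of $Out'$ and $C'$ before updating $Out$. With that fix your implementation is sound; note also that the count mechanism automatically enforces that a corrector is used at most once (its count falls to $0$ once its unique neighbour is pinned), which is needed because $g$ is necessarily injective, and that input vertices must be excluded from the corrector pool since $g$ maps into $V\setminus I$.
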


In order to prove  Theorem \ref{thm:algoflow}, we introduce the algorithm
\ref{algo:flow} which decides whether given an open graph has a causal
flow, and outputs a maximally delayed
causal flow if one exists. This
recursive algorithm is based on the recursive structure, pointed out in the previous
section, of the
maximally delayed causal flows.

The algorithm recursively finds the layers $(V^\prec_k)_{k=0\ldots d^\prec}$: at the $k^{th}$ call
to \verb@Flowaux@, the algorithm finds the set $V^\prec_k=Out'$ (see
algorithm \ref{algo:flow} and figure \ref{fig:algo}.) To improve
the complexity of the algorithm, a set $C$ of potential
correctors ($\forall u \in V^\prec_k, g(u) \in C$) is maintained. The
algorithm produces a subset $C'$ of $C$ of vertices that can be
actually used as correctors, the  set $Out'$ of vertices that can be
corrected by $C'$ is produced as well. 
For the recursive call, the vertices of $Out'$ are added to
the potential correctors, whereas the vertices used as
correctors  (i.e. $C'$)  are removed from the set of potential
correctors since a vertex can be used to correct at most one other
vertex. 

The partial order $\prec$ of the flow found by the algorithm is
defined via a labeling $l$ which associates with each vertex the index
of its layer. As a consequence, for any two vertices $u$ and $v$, $u\prec v$ iff $l(u)>l(v)$.

{\footnotesize    
\incmargin{1em} 
\restylealgo{boxed}\linesnumbered 
\begin{algorithm} \label{algo:flow} \footnotesize 
\SetKwData{In}{In} 
\SetKwData{Out}{Out} 
\SetKwData{Graph}{G}
\SetKwData{correct}{C} 
\SetKwData{Left}{left} 
\SetKwData{This}{this} 
\SetKwData{Up}{up} 
\SetKwFunction{Union}{Union} 
\SetKwFunction{Flow}{Flow} 
\SetKwFunction{Flowaux}{Flowaux} 
\SetKwFunction{FindCompress}{FindCompress} 
\SetKwInOut{Input}{input} 
\SetKwInOut{Output}{output} 
\caption{Causal flow} 
\Input{An open graph} 
\Output{A causal flow}  
\BlankLine 
\Flow(\Graph,I,O) =

\Begin{

\For{{\bf all} $v\in O$ }{$l(v):=0$ \;}
 \Flowaux(\Graph,I,O,O$\setminus$I,1)\;
}
\BlankLine 
\BlankLine 
\Flowaux(\Graph,\In,\Out,\correct,k) = \% precondition: \Out
$\setminus$ \correct cannot be used 
to correct elements of layer $k${\dontprintsemicolon \;}

\Begin{
\Out':=$\emptyset$\;
\correct':=$\emptyset$\;
\For{{\bf all} $v\in \correct$ }{ 
\If{$\exists u \ s.t. \ N(v)\cap (V\setminus \Out)=\{u\}$}{
$g(u):=v$ \;
$l(v):=k$\;
$Out' := Out' \cup \{u\}$\;
$C' :=C' \cup \{v\}$\;
}
}
\eIf{$Out'=\emptyset$}{
\eIf {$Out=V$}{true}{false}}
{
\Flowaux(\Graph,\In,\Out$\cup$ \Out',(\correct$\setminus$ \correct')$\cup$(\Out'$\cap V\setminus $\In),k+1)
}}
\end{algorithm} 
\decmargin{1em} 
}

\begin{figure}\label{fig:algo}
\begin{center}
\includegraphics[width=6cm]{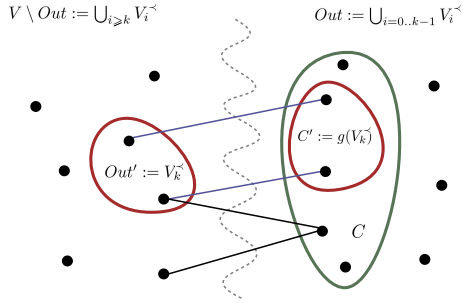}
\end{center}
\caption{\footnotesize { Causal flow algorithm: At the $k^{th}$ recursive call, the
algorithm finds out the set $V^\prec_k$ composed of the qubits that will be
measured at the $d^\prec-k+1$ step of the one-way quantum computation, where
$d^\prec$ is the depth of the computation.   
At that step, all the qubits in
$Out := \bigcup_{i=0..k-1}V^\prec_k$ are not measured, whereas the qubits in
$\bigcup_{i>k}V^\prec_k$ are already measured. The correctors of the elements
of $V^\prec_k$ are in a set $C\subseteq Out$ of candidates composed of
vertices not already assigned to the correction of some future
measurement. 
The first stage of the algorithm to find out the set $V^\prec_k$ consists in
searching, among $C$, for the elements that have a unique neighbor in
$V\setminus Out$. Let $C' \subseteq C$ be this set of correctors. Then,
the neighborhood $Out'$ of $C'$ in $V\setminus Out$ is a set of
elements that can be corrected at that step, so $Out'$ is nothing but
$V^\prec_k$. For the recursive call, the elements of $Out'$ are added to
both $Out$ and $C$, whereas  the elements of $C'$ are removed from
$C$. Since at each step, a maximum number of vertices are added to
$V^\prec_k$, the causal flow, if it exists, produced by this algorithm is
maximally delayed.}}
\end{figure}

\ 

\noindent \textbf{Proof of Theorem \ref{thm:algoflow}:} 

By  induction on the number of non output qubits, we prove  that if
the given open graph has a causal flow then the algorithm outputs a
maximally delayed one. 
Assume that the given open graph has a causal flow. First, if there is no non output qubit, then no correction is needed:
the empty flow $(g,\emptyset)$ (where $g$ is a function with an empty
domain) is a maximally delayed gflow. 
Now suppose that there exist some non output vertices,
 according to lemma \ref{lem:causalflowv1} the elements of $V^\prec_1$ satisfy
the test at line 13, moreover the 
precondition at line 8 (that can reformulated as
$g(V^\prec_1)\subseteq C$) implies that $V_1^\prec$ is composed of the elements that satisfy the
test at line 13. Thus, after the loop (line 19),
$Out'=V^\prec_1$ and $C'=g(V^\prec_1)$. Since the existence of a causal
flow
is assumed, $V^\prec_1$ cannot be empty (all non output qubits have to be corrected), thus the algorithm is called recursively. 
Lemma \ref{lem:flowv1} ensures the existence of a causal  flow in
$(G,I,O \cup V^\prec_1)$  and since the vertices in $C'$ have no neighbor in $V\setminus (O \cup
V^\prec_1)$, they can be removed from the set of potential correctors,
preserving the precondition. 

The induction hypothesis ensures that the recursive calls output a
maximally delayed causal flow in $(G,I,O \cup V^\prec_1)$ and thus the
causal flow $(g,\prec)$ defined is a maximally delayed causal flow of $(G,I,O)$.

The termination of the algorithm is ensured by the fact that the set of output qubits strictly increases at each recursive call.

For a given open graph, if the algorithm outputs a flow $(g,\prec)$, then this
flow is a valid causal flow since every output qubit has an image
under $g$, moreover for any vertex $i$, $i\prec g(i)$, and finally if
$j\in N(g(i))$ then $j=i$ or $i\prec j$. Thus, if the given open graph
has no flow, the algorithm returns false.

To analyze the complexity of the algorithm, we consider the cost for each vertex $u$, which can be decomposed in:
\begin{itemize}
\item Insert $u$ in the set $C$ of potential correctors
\item Create  the set of vertices that $u$ might correct  $N(u) \cap (V \setminus Out)$
\item Check whether $u$ can be removed  $|N(u) \cap (V \setminus Out)| =1$
\item Update the potential correctors sets $N(v) \cap (V \setminus Out)$ for $v\in C$  when $u$ is removed ($u$ belongs to $Out$ for the recursive call).
\end{itemize}

As there is at most $|Out|$ potential correctors, the cost for a vertex $u$ can be decomposed in: insert in $C$ + create + $|Out|$(remove + check).

Using a data structure for storing the sets   $N(u) \cap (V \setminus Out)$ (for example an array with two pointers respectively to next and previous elements), one can remove an element in constant time and test whether the set contains exactly one element in constant time. For the creation of the structure, one needs to compute the intersection of the neighborhood with  $(V \setminus Out)$. Checking whether a vertex is in $(V \setminus Out)$ can be done in constant time by  maintaining an array of the vertices that are to be corrected, thus given the adjacency list of a vertex $u$ the cost of creating  $N(u) \cap (V \setminus Out)$  is the degree of $u$ and the total finding cost is $O(m)$ where $m$ is the number of edges of the graph. The overall complexity is then $O(n |Out| +m)$. $\hfill \Box$

\ 

 This result improves the algorithm in
\cite{B07} that decides, under the precondition $|I|=|O|$, whether an
open graph $(G,I,O)$ has a causal flow in
$O(km)$ operations, where $k=|O|$. 
In \cite{B07c}, Pei and de Beaudrap have proved that an
open graph having a causal flow has at most $(n-1)k
-\binom{k}{2}$ edges. According to this result,
the algorithm in \cite{B07} can be transformed (see \cite{B07c}) into a $O(k^2n
)$-algorithm, whereas our algorithm becomes a
$O(kn)$-algorithm. 

\section{A polynomial algorithm for gflow}

\begin{theorem}

There exists a polynomial time algorithm that decides whether a given
open graph has a gflow, and that outputs a gflow if it exists.
\end{theorem}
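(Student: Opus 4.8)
The plan is to mimic the structure of the causal-flow algorithm of Section~3, replacing the local test ``$u$ has a unique neighbour in $V\setminus\textup{Out}$'' by a global linear-algebraic test over $\gftwo$. By Lemma~\ref{lem:flowv0} a maximally delayed gflow has $V^\prec_0=O$, and by Lemma~\ref{lem:gflowv1} its next layer is exactly the set of $u\in V\setminus O$ for which there is a $K\subseteq O\setminus I$ with $\odd(K)\cap(V\setminus O)=\{u\}$. The key observation is that computing this set is a linear algebra problem over $\gftwo$: writing $\Gamma$ for the $\gftwo$-adjacency matrix of $G$ and $M=\Gamma[\,V\setminus O,\ O\setminus I\,]$ for the submatrix whose rows are indexed by the non-output vertices and whose columns by the non-input output vertices, the condition $\odd(K)\cap(V\setminus O)=\{u\}$ for some $K$ (identified with its indicator vector $x\in\gftwo^{O\setminus I}$) holds precisely when the system $Mx=e_u$ is consistent, i.e.\ when $e_u$ lies in the column space of $M$; any solution $x$ then gives a corrector set $g(u)=\mathrm{supp}(x)\subseteq O\setminus I$, automatically nonempty. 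A single Gaussian elimination on $M$, keeping track of the performed row operations, decides this simultaneously for all $u\in V\setminus O$ and exhibits a witness $K_u$ for each $u$ that passes, in $O(n^3)$ operations.

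The algorithm then proceeds recursively, exactly as in the proof of Theorem~\ref{thm:algoflow}: starting from $O_0=O$, at step $k$ it computes the set above for the instance $(G,I,O_{k-1})$ (using $O_{k-1}$ in place of $O$), calls it $V^\prec_k=\textup{Out}'$, assigns layer label $k$ and corrector $g(u)=K_u$ to every $u\in\textup{Out}'$, and recurses on $(G,I,O_{k-1}\cup\textup{Out}')$. Unlike the causal-flow case, no corrector is removed, because in a gflow a single vertex may belong to several sets $g(u)$. The recursion halts when $\textup{Out}'=\emptyset$, returning the accumulated $(g,\prec)$, with $u\prec v$ iff the label of $u$ exceeds the label of $v$, provided every vertex has by then received a label, and ``no gflow'' otherwise. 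Termination is immediate since $|O_k|$ strictly increases, so there are at most $n$ recursive calls and the total running time is $O(n^4)$.

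Correctness follows by induction on $|V\setminus O|$, combining Lemmas~\ref{lem:flowv0}, \ref{lem:flowv1} and~\ref{lem:gflowv1} exactly as in the proof of Theorem~\ref{thm:algoflow}. If there is no non-output vertex, the empty gflow works, $O=V$, and the algorithm accepts. Otherwise, if a gflow exists then a maximally delayed one exists; its first layer $V^\prec_1$ is nonempty (the layers $V^\prec_1,\dots,V^\prec_{d^\prec}$ partition the nonempty set $V\setminus O$) and by Lemma~\ref{lem:gflowv1} it coincides with the set the algorithm computes at the first step; by Lemma~\ref{lem:flowv1} the reduced instance $(G,I,O\cup V^\prec_1)$ again admits a maximally delayed gflow, to which the induction hypothesis applies, and reassembling the layers yields a maximally delayed gflow of $(G,I,O)$. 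Conversely, whatever $(g,\prec)$ the algorithm outputs satisfies Definition~\ref{def:gflow}: every $j\in g(u)\subseteq O_{k-1}$ lies in a strictly smaller-labelled layer, hence $u\prec j$ (condition~1); the witness satisfies $\odd(g(u))\cap(V\setminus O_{k-1})=\{u\}$, so every $j\in\odd(g(u))$ with $j\neq u$ lies in $O_{k-1}$ and therefore $u\prec j$ (condition~2), while $u\in\odd(g(u))$ (condition~3). Hence a valid gflow is returned whenever one exists, and ``no gflow'' is returned only when none exists; moreover the gflow returned is maximally delayed.

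I expect the step requiring the most care to be the verification that the layers and corrector sets produced level by level really assemble into an object satisfying the three \emph{global} conditions of Definition~\ref{def:gflow} --- in particular that a parity condition established relative to the output set $O_{k-1}$ in force at level $k$ is not disturbed once $O_{k-1}$ is enlarged in later recursive calls --- together with the assembly of maximal delay from the recursion, where one argues as for Theorem~\ref{thm:algoflow} that a hypothetical strictly more delayed gflow of $(G,I,O)$ would, by Lemma~\ref{lem:flowv1}, restrict to a strictly more delayed gflow of $(G,I,O\cup V^\prec_1)$, contradicting the induction hypothesis. Both points parallel the causal-flow argument with parity conditions in place of neighbourhood conditions; the $\gftwo$-linear-algebra ingredient is routine once the reformulation ``$e_u\in\mathrm{colspan}(M)$'' is in place, and it is this ingredient whose cost dominates, giving the claimed polynomial bound.
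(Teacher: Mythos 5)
Your proposal is correct and follows essentially the same route as the paper: the same recursive layer-by-layer construction, the same reformulation of the layer test as solving $\Gamma_{V\setminus O,\,O\setminus I}\,x=e_u$ over $\gftwo$ (one elimination per level, $O(n^3)$ per call, $O(n^4)$ overall), and the same correctness argument via Lemmas \ref{lem:flowv0}, \ref{lem:flowv1} and \ref{lem:gflowv1} mirroring the induction in Theorem \ref{thm:algoflow}. Your explicit check of the three gflow conditions and the remark that correctors are not consumed (unlike the causal-flow case) are consistent with, and slightly more detailed than, the paper's own write-up.
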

\begin{proof}
Let $(G,I,O)$ be an open graph. The algorithm \verb@gFlow@$(\Gamma,I,O)$ (Algorithm \ref{algo:gflow}), where $\Gamma$ is the adjacency matrix of $G$, finds a maximally delayed gflow and returns true if one exists and returns false otherwise. 
Given a set $Out'$ and a subset $X \subseteq Out'$, $\mathbb I_{X}$
stands for a $|Out'|$-dimensional vector defined by $\mathbb I_{X}(i)=1$ if $i  \in X$ and $\mathbb I_{X}(i)=0$ otherwise.

\incmargin{1em} 
\restylealgo{boxed}\linesnumbered 
\begin{algorithm} \label{algo:gflow}\footnotesize
\SetKwData{In}{In} 
\SetKwData{Out}{Out} 
\SetKwData{Graph}{G}
\SetKwData{correct}{C} 
\SetKwData{Left}{left} 
\SetKwData{This}{this} 
\SetKwData{Up}{up} 
\SetKwFunction{Union}{Union} 
\SetKwFunction{gFlow}{gFlow} 
\SetKwFunction{gFlowaux}{gFlowaux} 
\SetKwFunction{FindCompress}{FindCompress} 
\SetKwInOut{Input}{input} 
\SetKwInOut{Output}{output} 
\caption{Generalized flow} 
\Input{An open graph} 
\Output{A generalized flow} 
\BlankLine 
\gFlow($\Gamma$,\In,\Out) =

\Begin{
\For{{\bf all} $v\in \Out$ }{ $l(v):=0$ \;}
 \gFlowaux($\Gamma$,\In,\Out,\Out$\setminus$ \In,1)\;
}
\BlankLine 
\BlankLine 

\gFlowaux($\Gamma$,\In,\Out,$k$) = {\dontprintsemicolon \;}
\Begin{
\Out':= \Out $\setminus$ \In\;
\correct:=$\emptyset$\;
\For{{\bf all} $u\in  V\setminus$ \Out }{ 
Solve in $\mathbb F_2 : \Gamma_{V\setminus \text{\Out,\Out'}} \mathbb I_X = \mathbb I _{\{u\}}$ \;
If there is a solution $X_{0}$ then \correct:=\correct$\cup \{u\}$ and $g(u):=X_{0}$ \;
$l(u)=k$\;
}
\eIf{\correct = $\emptyset$}{ \eIf{Out=V}{true}{false}}
{\gFlowaux($\Gamma$,\In,\Out$\cup$ \correct,k+1)}}

\end{algorithm} 
\decmargin{1em}

At the $k^{th}$ recursive call, the set $C$ found by the algorithm at
the end of the loop at line 16 corresponds to  the layer $V^\prec_k$
of the partition induced by the returned strict partial order. 
At line 13, the columns of the matrix $\Gamma_{V\setminus Out,Out'}$
correspond to the vertices that can be used for correction (vertices
in $\cup_{i<k}V^\prec_k \setminus In$) and the rows to the candidates for
the set $V^\prec_k$. A solution $X_0$ in $\mathbb F_2$ to
$\Gamma_{V\setminus Out,Out'}\mathbb I_{X} = \mathbb I_{\{u\}}$
corresponds to a subset of $\cup_{i<k}V^\prec_k \setminus In$ that has only
$u$ as odd neighborhood in $\cup_{i\le k} V^\prec_i$, thus $g(u):=X_0$
satisfies  conditions 2 and 3 required by the  definition of  gflow
(see Definition \ref{def:gflow}). Furthemore, line 10 of the algorithm ensures  condition 1, thus
if the algorithm returns a flow, then it  satisfies the definition of  gflows.

Now suppose that the  graph admits a gflow $(g,\prec)$, then it also admits a maximally delayed gflow $(g',\prec')$. The algorithm finds the set $V'_1$ (in the loop at line 12), and by induction (similarly to the induction in the proof of Theorem \ref{thm:algoflow}) it also finds a maximally delayed  gflow in $(G,I,O \cup V^\prec_1)$ with the recursive call. Thus the algorithm finds  a maximally delayed gflow.
\end{proof}

In order to analyse the complexity, notice that lines 10 to 16
consists in solving a system $Ax=b_i$ for $n-\ell$ different $b_i$s
where $n=\vert V\vert$, $\ell=\vert Out\vert $ and $A$ is a $(n-\ell)\times
\ell$ matrix. In order to solve these $n-\ell$ systems, the $(n-\ell)\times n$-matrix
$M=[A\vert b_1\ldots b_{n-\ell}]$ is transformed into an upper
triangular form within $O(n^3)$ operations using gaussian
eliminations for instance, then for each $b_i$ a back substitution within $O(n^2)$
operations is used to find $x_i$, if it exists, such that
$Ax_i=b_i$ (see \cite{B06}). The back substitutions costs $O(n^3)$ operations at each
call of the function. Since there are at most $n$ recursive calls, the
overall complexity is $O(n^4)$.

\section{Depth Optimality}
We consider in this section the depth of the flows found by the algorithms, which corresponds to the number of steps required by the correction strategy, and we show that both algorithms find minimal depth flows, hence optimal correction strategies.
\begin{theorem}
\label{opt}
The previous algorithms find an optimal depth flow
\end{theorem}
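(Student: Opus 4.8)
The plan is to show that a maximally delayed flow has minimal depth, and then invoke the fact (already established in the proofs of the previous theorems) that both algorithms output a maximally delayed flow. So the real content is a statement about maximally delayed flows in the abstract, applying uniformly to causal flows and gflows, since all the structural lemmas (\ref{lem:flowv0}, \ref{lem:flowv1}) were proved for both.

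First I would set up the key comparison lemma: if $(g,\prec)$ is maximally delayed and $(g',\prec')$ is any flow of the same open graph, then for every $k$ we have $\bigl|\bigcup_{i=0}^{k} V^\prec_i\bigr| \ge \bigl|\bigcup_{i=0}^{k} V^{\prec'}_i\bigr|$. The base case $k=0$ is Lemma~\ref{lem:flowv0}: $V^\prec_0 = O \supseteq V^{\prec'}_0$ (in fact $V^{\prec'}_0 \subseteq O$ always, by condition~1 of the flow definitions, since vertices with no $g$-image must be outputs). For the inductive step I would peel off the top layer using Lemma~\ref{lem:flowv1}: $(\tilde g,\tilde\prec)$ is a maximally delayed flow of $(G,I,O\cup V^\prec_1)$, and I would want to compare it against a suitably restricted version of $(g',\prec')$. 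The subtlety is that $(g',\prec')$ restricted to $(G,I,O\cup V^\prec_1)$ need not be a flow unless $V^{\prec'}_1 \subseteq V^\prec_1$ — but that containment is exactly what Lemmas~\ref{lem:gflowv1} and~\ref{lem:causalflowv1} give us, since those lemmas characterize $V^\prec_1$ as the set of \emph{all} vertices correctable at the last step, a characterization independent of the particular flow; hence $V^{\prec'}_1$, being correctable-at-last-step vertices for $(g',\prec')$, is contained in $V^\prec_1$. Once $V^{\prec'}_1\subseteq V^\prec_1$, the restriction of $(g',\prec')$ is a flow of $(G,I,O\cup V^\prec_1)$, the induction hypothesis applies at level $k-1$ to the two restricted flows, and re-adding the (equal-or-larger) layers $V^\prec_0\cup V^\prec_1$ versus $V^{\prec'}_0\cup V^{\prec'}_1$ on top gives the level-$k$ inequality. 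Here I would need the elementary fact that if $a\ge a'$ (comparing the sizes of the combined top two layers) and $b\ge b'$ (the inductive inequality on the restricted graphs, whose combined-layer counts are shifted by two), then the sum over the original $k+1$ layers also dominates.

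From the comparison lemma the theorem is immediate: if $(g,\prec)$ is maximally delayed with depth $d$ and $(g',\prec')$ is any flow with depth $d' < d$, then taking $k=d'$ we get $\bigl|\bigcup_{i=0}^{d'} V^\prec_i\bigr| \ge \bigl|\bigcup_{i=0}^{d'} V^{\prec'}_i\bigr| = |V(G)|$, forcing $V^\prec_{d'+1}=\emptyset$ and hence $d\le d'$, a contradiction. Since the algorithm of Theorem~\ref{thm:algoflow} produces a maximally delayed causal flow and the algorithm of Algorithm~\ref{algo:gflow} produces a maximally delayed gflow, both output flows of minimal depth.

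The main obstacle is the inductive step, specifically making precise the claim that $V^{\prec'}_1 \subseteq V^\prec_1$ so that restriction of the arbitrary flow to the enlarged-output graph is legitimate, and then bookkeeping the layer-size inequalities correctly through the two-layer peel (the restricted flow's layer $i$ corresponds to the original layer $i+2$, except that its own layer~$0$ absorbs $O\cup V^\prec_1$). Everything else is a routine consequence of Lemmas~\ref{lem:flowv0}--\ref{lem:causalflowv1}, which is why I would lean on them heavily rather than re-deriving structural facts about flows.
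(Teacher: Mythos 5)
Your proposal is correct in outline but takes a genuinely different route from the paper. You prove a layer-wise \emph{domination} lemma --- a maximally delayed flow satisfies $|\cup_{i=0}^{k}V^\prec_i|\ge|\cup_{i=0}^{k}V^{\prec'}_i|$ against \emph{every} flow $(g',\prec')$, i.e.\ maximal elements of the ``more delayed'' preorder are in fact maximum --- and optimality falls out immediately. The paper never establishes this. Instead it argues: (i) ``more delayed'' cannot increase depth (so $d^\prec\le d^{\prec'}$ whenever $(g,\prec)$ is more delayed than $(g',\prec')$); (ii) any optimal-depth flow sits below \emph{some} maximally delayed flow, which is therefore also of optimal depth; (iii) by Lemma~\ref{lem:gflowv1} (resp.~\ref{lem:causalflowv1}) all maximally delayed flows share the same canonical first layer $V^\prec_1$, equal to the one the algorithm computes, giving the recurrence $d(G,I,O)=1+d(G,I,O\cup V^\prec_1)$ and an induction on the number of outputs. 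So the paper sidesteps your comparison lemma entirely by comparing the algorithm's output only to a maximally delayed optimal flow (which has the same layers), at the price of tacitly invoking the existence of a maximally delayed flow above any given flow. Your route proves a stronger structural fact and is self-contained once the induction is carried out; the paper's is shorter. Both peel off $V^\prec_1$ and lean on Lemmas~\ref{lem:flowv0}--\ref{lem:causalflowv1} in the same way.

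One wrinkle you should fix: the containment $V^{\prec'}_1\subseteq V^\prec_1$ is not literally true for an arbitrary flow, because nothing in the flow conditions prevents $\prec'$ from relating two output vertices, so $V^{\prec'}_1$ may contain outputs that Lemmas~\ref{lem:gflowv1} and~\ref{lem:causalflowv1} exclude from $V^\prec_1$. The correct statement, and all you need, is $V^{\prec'}_0\cup V^{\prec'}_1\subseteq V^\prec_0\cup V^\prec_1$ (equivalently $V^{\prec'}_1\setminus O\subseteq V^\prec_1$): for $u\in V^{\prec'}_1\setminus O$ every $v$ with $u\prec' v$ lies in $V^{\prec'}_0\subseteq O$, so conditions 1--3 force $g'(u)\subseteq O$ and $Odd(g'(u))\cap(V\setminus O)=\{u\}$, placing $u$ in the canonical set. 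Since your comparison is on unions of layers, which absorb $O$, this repair costs nothing; the same care is needed at each level of the induction (compare $\cup_{i\le k}$, not individual layers). With that adjustment, and the bookkeeping you already flag (the peel shifts indices by one, not two), the argument goes through.
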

\begin{proof}
First, notice that if $(g,\prec)$ is more delayed then $(g',\prec')$ then
$|\cup_{i=0\ldots d^{\prec'}} V^\prec_{k}|  \ge   |\cup_{i=0\ldots
d^{\prec'}} V^{\prec'}_{k}|=|V|$. Thus $\forall k> d^{\prec'}$,
$V^{\prec}_k=\emptyset$, so $d^\prec \le d^{\prec'}$. 

Given an open graph $(G,I,O)$, and an optimal depth flow $(g,\prec)$, we can define a maximally delayed optimal depth flow $(g',\prec')$:  If $(g,\prec)$ is maximally delayed then $(g',\prec')=(g,\prec)$ otherwise $(g',\prec')$ is a  maximally delayed flow that is more delayed than $(g,\prec)$. According to the previous remark $(g',\prec')$ is of optimal depth as well and the  optimal depth $d(G,I,O)=d^{\prec} =d^{\prec'}$.
 By lemma  \ref{lem:flowv0}, 
$V_1^{\prec'}=\{u,\exists K \subseteq O$,  $Odd(K) \cap V\setminus O=\{ u\}\}=V_1^{\prec''}$ where $(g'',\prec'')$ is the flow found by the algorithm. Thus $d(G,I,O)=1+d(G,I,O \cup
V^{\prec''})$.  By
induction on the number of output vertices the algorithm gives an optimal flow for  $(G,I,O \cup
V^{\prec''})$   so the depth given by the algorithm is the depth of
$(g',\prec')$  which is optimal.

\end{proof}

The optimality of the previous algorithms have several implications in one-way
quantum computation. First, the depth (optimal or not) of a flow is an
upper bound on the depth of the corresponding deterministic one-way
quantum computation. 
Moreover, if the one-way quantum computation is uniformly, stepwise and strongly
deterministic (which mainly means that if the measurements are applied
with an error in the angle which characterises the measurement,
then the computation is still deterministic), then the correction
strategy must be described by
a gflow \cite{BKMP}. As a consequence the algorithm 2 produces the optimal
correction strategy, and the depth of the gflow produced by the
algorithm is a lower bound on the depth of a uniformly, stepwise and
strongly deterministic one-way quantum computation.

\section{Conclusion}
Starting from quantum computational problems (determinism in one-way
quanum computation), interesting graph problems have arisen like the
graph flow dependence of the depth of correcting strategies for measurement-based quantum computation. 

We have defined in this paper two algorithms for finding optimal
causal flow and gflow. The key points are: the simplification of the structure of the flows considering only the maximally delayed flows
which  have a nice recursive structure;  a backward analysis (start from the outputs) which allows to take advantage of this structure and avoids backtracking.

From a complexity point of view, an important  question is:  given a graph state and a fixed set of measurements (we relax the uniformity condition) what would be the depth of an optimal correction strategy. One direction to answer this question would be to define a weaker flow that is still polynomially computable.

One can also consider the characterization and the depth of computation in more generalized measurement-based models where  other planes of measurements are allowed.

\section*{Acknowledgements}
The authors would like to thank Elham Kashefi and Philippe Jorrand for
fruitful discussions.

\end{document}